\documentclass[letterpaper, 10 pt, conference]{ieeeconf}  

\IEEEoverridecommandlockouts                              

\usepackage{times} 
\usepackage{newtxtext}

\usepackage[hyphens]{url}  
\usepackage{graphicx} 
\usepackage{float}
\usepackage[ruled,vlined,linesnumbered]{algorithm2e}
\SetCommentSty{emph}
\usepackage{amsmath,amssymb}

\usepackage{cleveref}

\SetKwInput{KwParam}{Parameter}
\SetKwBlock{Loop}{Loop}{end}

\usepackage{newfloat}
\usepackage{listings}
\floatstyle{ruled}
\newfloat{listing}{tb}{lst}{}
\floatname{listing}{Listing}

\newtheorem{theorem}{Theorem}
\newtheorem{prop}{Proposition}
\newtheorem{lemma}{Lemma}

\usepackage{booktabs}

\title{\LARGE \bf
Scalable Long-Horizon Planning with Staggered Updates for Lifelong MAPF
}

\graphicspath{/Figures}

\author{Vaibhav Sanjay$^{1}$ and Jiaoyang Li$^{1}$
\thanks{$^{1}$Vaibhav Sanjay and Jiaoyang Li are with the Robotics Institute, 
        Carnegie Mellon University.
        {\tt\small \{vsanjay,jiaoyanl\}@andrew.cmu.edu}}%
}

\begin{document}

\maketitle
\thispagestyle{empty}
\pagestyle{empty}

\begin{abstract}
Lifelong Multi-Agent Path Finding (LMAPF) requires generating collision-free paths for large agent fleets under strict real-time constraints.
Reactive frameworks such as PIBT and Enhanced PIBT (EPIBT) scale effortlessly to thousands of agents through rule-based, step-by-step coordination but suffer from severe temporal myopia, making them ineffective in scenarios where long-horizon reasoning is essential. RHCR plans windowed paths over multi-step horizons but incurs substantial planning overheads that hinder scalability. TP tackles both challenges by planning only subsets of agents at each timestep, yet its applicability is restricted to highly structured maps. 
To achieve long-horizon planning at scale across general maps, we propose Path Updates over Staggered Horizons (PUSH), a LMAPF planner capable of coordinating thousands of agents in under a second while planning over multi-step horizons. PUSH combines the key advantages of PIBT, RHCR, and TP. Like TP, PUSH reduces computational complexity by planning only a subset of agents at each timestep using staggered planning windows. Unlike TP, however, PUSH plans RHCR-style windowed paths in general maps without relying on restrictive map assumptions. To maintain high throughput in congested environments, PUSH further integrates EPIBT-inspired priority inheritance, backtracking, and anytime improvements into its windowed planning. Empirical evaluations across two realistic MAPF scenarios requiring long-horizon reasoning show that PUSH scales to the same massive agent loads as EPIBT (e.g., 10k agents) while achieving significantly higher system throughput than all baselines.
\end{abstract}

\section{Introduction}
Multi-Agent Path Finding (MAPF) is the problem of finding collision-free paths for a group of agents, navigating them from their corresponding start locations to their goal locations. MAPF is applicable to several real-world problems, such as planning for robotic sortation centers and autonomous warehouses. To better represent the constant flow of tasks that autonomous warehouses typically experience, researchers typically study a variant of the MAPF problem called Lifelong MAPF (LMAPF). In this variant, agents receive new goals upon reaching their current ones, necessitating a continuous planning cycle for agents.

One of the earliest LMAPF algorithms is Token Passing (TP) \cite{tp}, a long-horizon planner that achieves high scalability by planning paths for agents as they receive new goals in a prioritized planning manner. However, TP does not apply to all map structures in general, limiting its applicability in practice. Rolling Horizon Collision Resolution (RHCR) \cite{rhcr} removes this limitation by planning paths over a fixed, long-horizon window using standard MAPF algorithms and has become the most widely adopted long-horizon LMAPF planner. However, it suffers from limited scalability since it replans all agents simultaneously within each planning window. More recently, Priority Inheritance with Backtracking (PIBT) \cite{pibt} and its enhanced variant EPIBT \cite{epibt} have emerged as highly scalable LMAPF planners for general maps. They replace long-horizon planning with reactive, rule-based coordination through recursive priority inheritance and backtracking. Nevertheless, the price of this computational efficiency is severe temporal myopia.

While (E)PIBT performs impressively under standard LMAPF benchmarks \cite{benchmark}, many real-world applications require long-horizon reasoning that these benchmarks fail to capture. For example, LMAPF assumes that agents can immediately move towards their next goals upon reaching their current ones. In practice, agents often remain stationary while executing tasks such as loading or unloading, creating temporary goal occupancies that induce severe queuing if not anticipated in advance. Furthermore, some industrial layouts feature long, narrow corridors \cite{symbotic, delivery}, requiring agents to anticipate downstream traffic prior to entry. We evaluate against these two representative benchmark test cases in our experiments and discuss additional real-world applications in \Cref{sec:conclusion}.

In this paper, we introduce Path Updates over Staggered Horizons (PUSH), an LMAPF algorithm that maintains the robust, long-horizon reasoning of frameworks like RHCR while matching the extreme scalability and congestion-handling capabilities of reactive planners like EPIBT. PUSH uniquely bridges the gap between subset-agent planners like TP, which scale by routing a subset of agents over full horizons, and windowed planners like RHCR, which plan windowed horizons over the entire fleet. Our framework maintains a global set of safe fallback paths, enabling selective, windowed replanning for only a partial subset of agents across unaligned temporal windows. Furthermore, PUSH includes EPIBT-inspired recursive displacement. Agents calculate fleet-wide priorities before planning, allowing higher-priority agents to actively push lower-priority agents and trigger recursive replanning. This allows for highly scalable planning even in severely congested scenarios. Finally, we integrate a localized, anytime Large Neighborhood Search routine tailored to our staggered windowed structure to continually drive plans toward optimality within strict real-time execution budgets. We compare the algorithms on two benchmarks that require long-horizon reasoning and demonstrate that our approach outperforms all baselines significantly in system throughput.

\section{Related Work}
We provide an introduction to three different LMAPF planning paradigms, along with an introduction to an example in each category. We then discuss anytime optimization techniques commonly used in online planning methods.

\subsection{Subset Planning}
One idea for scaling LMAPF is to avoid planning all agents at the same time. Instead, only agents that have completed their current paths are replanned.
Existing algorithms in this category~\cite{tp, ta-prioritized, sbda} rely on specialized map structures and/or goal assignment algorithms. PUSH, to our knowledge, is the first algorithm in this category that requires neither. We use TP~\cite{tp} as a representative example to illustrate the key ideas behind this class of algorithms.

\subsubsection{TP}
Token Passing (TP) is an early LMAPF algorithm that replans only agents that just received new goals. Although replanning only this subset drastically reduces planning complexity, TP treats all other agents as higher-priority agents, preventing newly replanned agents from colliding with any of them. This rigidness frequently leads to unsolvable planning queries and permanent deadlocks whenever an already-planned path obstructs a critical bottleneck. Thus, TP's solvability guarantees are fundamentally restricted to \emph{well-formed} maps, which ensure that every pair of goals is connected by a path that avoids all other goals.

\subsection{Windowed Planning}
Another prevalent idea for scaling LMAPF is windowed planning, which bounds the temporal complexity of the problem by optimizing agent paths over a fixed number of timesteps. Examples include WHCA* \cite{whca} and the more general framework RHCR \cite{rhcr}, which we describe below. 

\subsubsection{RHCR}
Rolling Horizon Collision Resolution (RHCR) divides the lifelong problem into fixed-length episodes. It is governed by two key hyperparameters: the planning horizon $W$ and the execution horizon $K$ (where $K \leq W$). At the beginning of each episode, RHCR uses a windowed MAPF planner to generate $W$-step collision-free paths for the entire fleet. The agents then execute the first $K$ steps before replanning. RHCR works with many MAPF planners on general maps, making it a popular framework with many follow-up works~\cite{ggo,symbotic}.
Subsequent extensions to RHCR enhance reasoning in long corridors \cite{rhcr-corridor}, employ learning-based agent prioritization \cite{symbotic}, reuse experience from previous windowed horizons \cite{exrhcr}, or dynamically adjust graph edge weights to reduce congestion \cite{ggo}. However, despite these refinements, RHCR remains fundamentally bottlenecked at scale because replanning the entire fleet simultaneously incurs costly overheads in large fleets or dense traffic.

\subsection{Reactive Planning}
Reactive planning encompasses a wide variety of scalable approaches that optimize agent movements over short temporal horizons, typically considering only the next step. Representative examples include rule-based, PIBT-style algorithms derived from Priority Inheritance with Backtracking \cite{pibt, epibt} and learning-based algorithms \cite{scrimp, primal2}. Recently, hybrid algorithms merge these paradigms to deliver even stronger performance \cite{sillm}. In this paper, we focus on PIBT-style algorithms, which serve as the foundational bedrock for state-of-the-art reactive planners.

\subsubsection{PIBT}
Priority Inheritance with Backtracking (PIBT) \cite{pibt} utilizes a reactive, one-step decision rule. At each timestep, PIBT establishes a global priority ordering across the agent fleet. Agents then plan sequentially in descending order of priority, each selecting its best next action. If a higher-priority agent chooses a move that pushes a lower-priority agent, the displaced agent immediately plans its own evasive move, temporarily inheriting the higher priority. If an agent fails to find a valid action, PIBT backtracks along the dependency chain, forcing the preceding agent to select its next best action. Clearly, the priority ordering is crucial to its performance. Common priorities include longer elapsed time (sorting agents by elapsed timesteps since reaching the last goal) and closer goal (sorting agents by distance to the next goal)~\cite{Jiang2026md-pibt}.
Subsequent papers improved the next-step selection procedure to better avoid congestion, improving the throughput of the system~\cite{traffic-flow}. Other papers improved upon the algorithm itself, such as using regret learning to better improve tiebreaking action selections \cite{pibt-regret}. Papers such as LaCAM \cite{lacam} utilize PIBT as a fast low-level successor generator to efficiently explore many joint-state agent configurations.

\subsubsection{EPIBT} To bridge the gap between the strict one-step nature of PIBT and longer-horizon windowed methods, Enhanced PIBT (EPIBT) \cite{epibt} extends PIBT with multi-action planning by enabling agents to reserve paths of up to $W$ actions. However, due to its rule-based nature, EPIBT keeps $W$ small (i.e., 3-5) and restricts each planned path to push at most one lower-priority agent. It follows the same recursive priority inheritance and backtracking framework as PIBT to push conflicting agents out of the way.

Despite their fast planning speed, reactive planners generally lack the long-horizon foresight needed to handle certain real-world scenarios due to their limited planning horizons.

\subsection{Anytime Optimization}
Anytime optimization improves solution quality during the idle time between the completion of planning and the per-timestep runtime limit. It is particularly effective for online planning, allowing an initial feasible solution to be refined without delaying execution, and is therefore used by many LMAPF systems \cite{epibt,pie,wppl,flatland}.

\subsubsection{LNS} A common approach is Large Neighborhood Search (LNS). Given an existing MAPF solution, MAPF-LNS \cite{lns} repeatedly selects a subset of agent paths, removes them, and replans using Prioritized Planning (PP) while treating unselected paths as obstacles.
In highly congested scenarios where PP often fail to find a feasible repair, EPIBT proposes a modified LNS tailored to its solution. It repeatedly selects a random agent, removes its $W$-step path, and assigns it the highest priority to override neighboring paths. If the resulting solution improves upon the current one, it is accepted. This iterative process runs until the timestep budget is exhausted.

\section{Problem Formulation}
\label{sec:prob_formulation}
LMAPF is defined on a graph, 
typically a grid map where adjacent cells are connected by edges. We want to navigate a set of $N$ agents, $\mathcal{A} = \{r_1,...,r_N \}$, from their current positions to their assigned goals. Upon completing a task at its current goal, an agent is assigned a new goal by a given scheduler. Time is discretized into timesteps, and at each timestep an agent may either move to an adjacent vertex or wait in place. A collision occurs when two agents occupy the same vertex at the same timestep (vertex collision) or swap vertices in consecutive timesteps (edge collision). Our goal is to maximize throughput (the average number of goals reached per timestep) while avoiding collisions.

\subsubsection{Task Completion Time}
To simulate the effect of agents completing tasks at their goals (e.g., loading/unloading packages), we introduce the \emph{Task Completion Time} (TCT).
If the TCT is $T$ timesteps, an agent reaching its goal at timestep $t$ must remain there until timestep $t + T$ before receiving a new goal from the scheduler. We further assume that TCT may be stochastic and is unknown in advance. When $T = 0$, this formulation reduces to standard LMAPF.

\section{Method}
We introduce Path Updates over Staggered Horizons (PUSH), an LMAPF algorithm that combines key principles from TP, RHCR, and EPIBT to simultaneously achieve long-horizon planning, high scalability, and general map applicability.

Conceptually, PUSH extends Token Passing (TP), which scales by replanning only agent subsets and maintaining a set of collision-free fallback paths $\Pi$. To ensure that a new path always exists, TP maintains a set of collision-free paths $\Pi$. Importantly, TP requires $\Pi$ to be free of target collisions; that is, if agents reaching their goals wait there indefinitely, they will not collide with any other agents. TP replans an agent when it reaches its current goal. In the worst case, the agent can therefore wait at its current goal until all other agents reach their goals before moving to its next goal, guaranteed by the assumption of a well-formed map. PUSH eliminates this constraint by integrating RHCR’s rolling-horizon lookahead: rather than planning fully to the goal, agents replan $W$-step paths that remain collision-free against $\Pi$. A feasible path always exists because any agent can always default to its existing fallback path in $\Pi$.

While this baseline (denoted as PUSH-lite in our experiments) yields an efficient planner, forcing replanning agents to treat all other agent's paths as immutable hard constraints leads to excessive waiting. To resolve this, PUSH maintains persistent agent priorities as in PIBT and EPIBT and allows higher-priority agents to recursively displace lower-priority neighbors, drastically improving path quality while preserving scalability and map generality. 

We now formally describe PUSH. 

\subsection{High-Level Planning}
PUSH relies on three predefined parameters: $W$, $K$ ($K \leq W$), and $M$. As in RHCR, $W$ and $K$ denote the planning and execution horizons, respectively. Borrowing from EPIBT, $M$ bounds the maximum number of times an agent can be revisited during one recursive search, preventing excessive computation time.

PUSH calls \Cref{alg:push} at the start of every timestep before agents move. The input includes the set of agents $r_i \in \mathcal{A}$, along with their current positions $s_i$, goals $g_i$, and priorities $p_i$. 
PUSH also maintains a countdown array $\textit{time2Replan}$, indicating when each agent should be replanned, and a set of $W$-step collision-free paths $\Pi$ for all agents. Before timestep 0, $\textit{time2Replan}$ is set to evenly distributed values between $0$ and $K - 1$ inclusive, which facilitates PUSH's subset planning behavior. $\Pi$ is initialized with $W$-step waiting-in-place operations for each agent. 

At the beginning of each timestep, 
PUSH takes $\Pi$ from the previous timestep, removes the first action (which has already been executed) from every path, and appends a \textsc{wait} action (line \ref{line:do_action}). This immeiately gives us a set of $W$-step collision-free paths. However, these paths contain many \textsc{wait} actions and can likely be improved. Therefore,
PUSH then identifies the subset $Q \subseteq A$ of agents that require path updates. We first skip task executing agents (lines \ref{line:task_skip_start}). Agents that have reached their goals and are assigned \textit{no-op} paths (i.e., waiting at their goals for the entire horizon) begin executing their tasks (lines \ref{line:noop_check}-\ref{line:begin_task}). For all other agents, PUSH decrements $\textit{time2Replan}[r_i]$ (line \ref{line:decrement}), then checks if the agent has exhausted its execution window or its next action is a wait in place (line \ref{line:check_t2r}). If so, it is inserted into $Q$ (line \ref{line:add_queue}). $Q$ is then sorted in order of priority, breaking ties by agent ID (line \ref{line:sort_priority}).

Next, PUSH replans each agent in $Q$ in order. It first constructs a protected set of agents $\mathcal{H}$, consisting of higher-priority agents together with agents currently executing tasks ($\mathcal{H}_0$),  which cannot be pushed during replanning (line \ref{line:rec_stack_set}). 
It then initializes the global visit counter \textit{visitCnt}, which limits recursive replanning to at most $M$ vists per agent (line \ref{line:visit_count_set}). Finally, PUSH invokes $\textsc{PushPlan}$ to compute a new $W$-step path through recursive priority-pushing (\cref{line:push_plan_call}).

\subsection{Priority Inheritance Pushing}
\begin{algorithm}[tb]
\caption{PUSH}
\label{alg:push}
\KwIn{Agents $r_i \in \mathcal{A}$ with positions $s_i$, goals $g_i$, and priorities $p_i$, their time to replan $\textit{time2Replan}$ and their paths $\Pi$ from previous timestep}
\KwOut{Updated paths $\Pi$ for the current timestep}
\ForEach{path $\pi \in \Pi$}{
    Remove $\pi[0]$ from $\pi$ and append a \textsc{wait} action\; \label{line:do_action}
}
$Q \gets \emptyset$\tcp*{Agents that need to be replanned}\label{line:queue}
$\mathcal{H}_0 \gets \emptyset$\tcp*{Agents that are executing tasks}
\ForEach{agent $r_i \in \mathcal{A}$}{
    \lIf{$r_i$ is doing task}{\label{line:task_skip_start}$\mathcal{H}_0 \gets \mathcal{H}_0  \cup \{r_i\}$\label{line:task_skip_end}}
    \uElseIf{$\Pi[r_i]$ is a no-op path and $s_i = g_i$}{ \label{line:noop_check}
        $r_i$ starts its task\;\label{line:begin_task}
        $\mathcal{H}_0 \gets \mathcal{H}_0  \cup \{r_i\}$\;
    }
    \Else
    {
        $\textit{time2Replan}[r_i] \gets \textit{time2Replan}[r_i] - 1$\;\label{line:decrement}
        \lIf{$\textit{time2Replan}[r_i] = 0$ or $\Pi[r_i][0] = \textsc{wait}$ \label{line:check_t2r}} {$Q \gets Q \cup \{r_i\}$}\label{line:add_queue}
    }
}
Sort $Q$ by priority, 
breaking ties by agent ID\;\label{line:sort_priority}
\ForEach{agent $r_i \in Q$ \label{line:for-loop}}{
    $\mathcal{H} \gets \mathcal{H}_0  \cup \{r_j \in A \mid p_{j} \textnormal{ is higher than } p_i\}$\;
    \label{line:rec_stack_set}
    $\textit{visitCnt}[r_i] \gets 0, \forall r_i \in \mathcal{A}$\; \label{line:visit_count_set}
    $\textsc{PushPlan}(r_i, \mathcal{H}, \Pi, \textit{visitCnt})$\;\label{line:push_plan_call}
}
\Return $\Pi$\;
\end{algorithm}

\begin{algorithm}[tb]
\caption{\textsc{PushPlan} in PUSH}
\label{alg:push-plan}
\KwIn{Agent $r_i$, set of higher-priority agents $\mathcal{H}$, paths for all agents $\Pi$, and visit counts \textit{visitCnt}}
\KwOut{\textsc{true} if a collision-free path for $r_i$ was found, \textsc{false} otherwise}

\lIf{$\textit{visitCnt}[r_i] \geq M$}{\label{alg:max_visit}\Return \textsc{false}}
$\textit{visitCnt}[r_i] \gets \textit{visitCnt}[r_i] + 1$\;\label{alg:visit_count_inc}
$\textit{cons} \gets \emptyset$\tcp*{Constraints}
$\pi^{\mathrm{old}} \gets \Pi[r_i]$\;\label{alg:save_old}
\Loop{\label{line:loop}
    $(\pi, r_d, t_d) \gets \textsc{PlanPath}(r_i, \textit{cons},  \mathcal{H},  \Pi)$\; \label{alg:call_pap}
    \If{$\pi = \emptyset$}{\label{alg:no_sol}
        $\Pi[r_i] \gets \pi^{\mathrm{old}}$\;\label{alg:rec_fail_start}
        \Return \textsc{false}\;
    }
    $\Pi[r_i] \gets \pi$\;
    \If{$r_d$ does not exist or $\textsc{PushPlan}(r_d, \mathcal{H} \cup \{r_i\}, \Pi,  \textit{visitCnt})$}{\label{alg:no_col}
        $\textit{time2Replan}[r_i] \gets K$\;\label{line:reset_steps}
        \Return \textsc{true}\; \label{line:ret_true}
    }
    {
        $\textit{cons} \gets \textit{cons} \cup \{(r_d,\; t_d)\}$\;\label{alg:add_avoid_set}
    }
}
\end{algorithm}
\begin{figure}[t]
\centering
\includegraphics[width=0.40\textwidth]{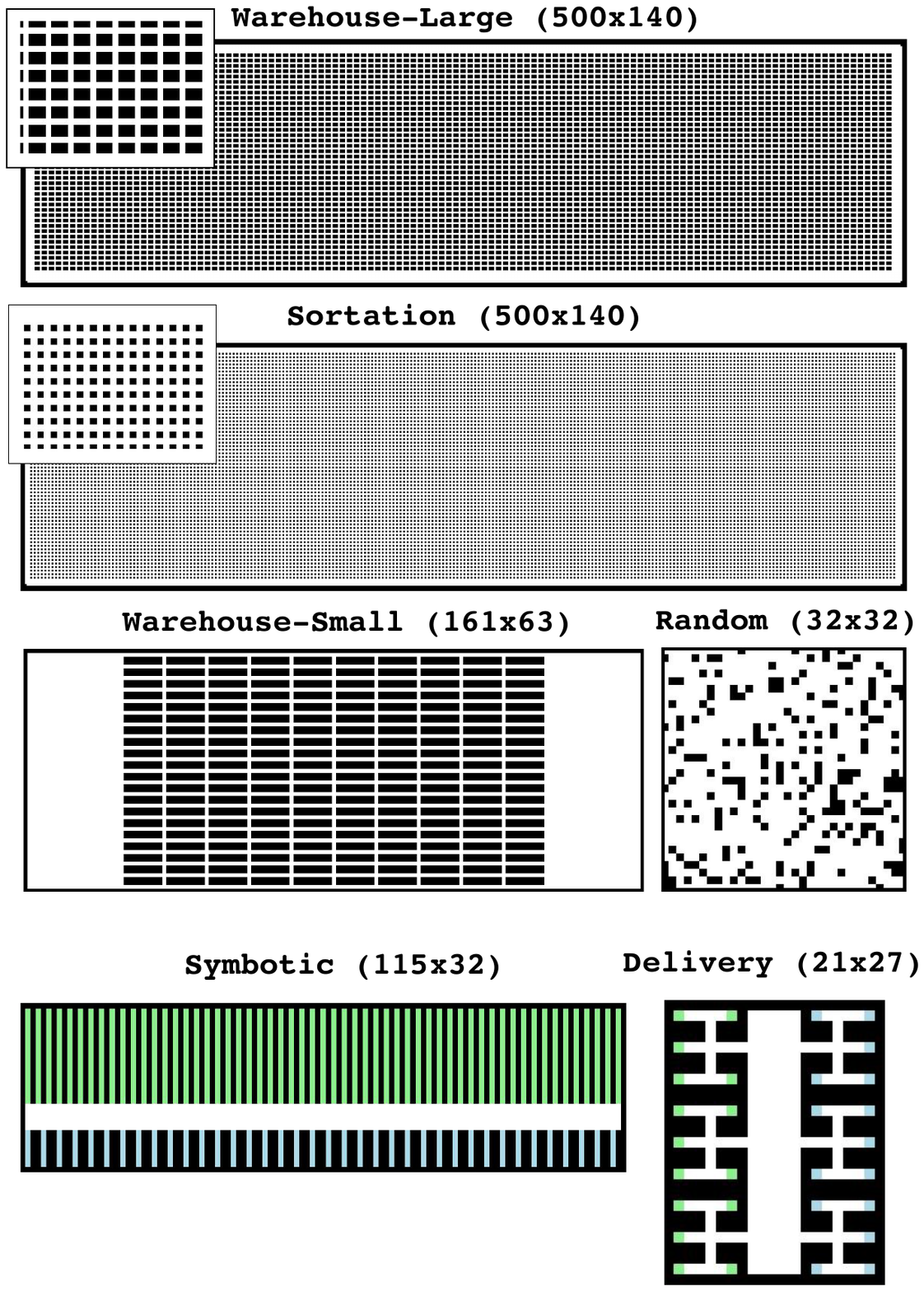} 
\caption{A visualization of the maps used in the experiments. For the Symbotic and Delivery maps, agent goal locations must swap between green and blue tiles. For all other maps, agent start and goal locations are randomly distributed across the free spaces.}
\label{fig:maps}
\end{figure}
Algorithm \ref{alg:push-plan} shows \textsc{PushPlan}, which aims to plan a new $W$-step path for agent $r_i$. 
At a high level, \textsc{PushPlan} follows the priority inheritance and backtracking mechanism of EPIBT. It first plans a path for $r_i$ that collides with at most one lower-priority agent in $\mathcal{A} \setminus \mathcal{H}$ (line \ref{alg:call_pap}). If such a collision occurs, the displaced agent $r_d$ inherits the protected set $\mathcal{H}$ of $r_i$, adds $r_i$ to $\mathcal{H}$, and is recursively replanned (line \ref{alg:no_col}). The recursion continues until either a collision-free solution is found, in which case the search unwinds, and all agents in the dependency chain adopt their new paths and reset  $\textit{time2Replan}$ to $K$ (line \ref{line:reset_steps}), or no feasible paths exists, triggering backtracking (line \ref{alg:no_sol}). 

To support larger planning horizons, PUSH differs from EPIBT in three key aspects. The first difference, already introduced in \Cref{alg:push}, is that PUSH replans only a subset of agents at each timestep and executes each planned path for up to $K$ steps instead of one. 

The second difference lies in the realization of \textsc{PlanPath} (\cref{alg:call_pap}), which computes an optimal path for $r_i$ that satisfies the constraints in $\textit{cons}$ (explained later) and collides with at most one lower-priority agent. Unlike EPIBT, which precomputes all $W$-step paths, PUSH runs a modified windowed space-time A*, making large planning horizons computationally tractable. 
Each search state records the displaced agent (if any), allowing repeated collisions with the same lower-priority agent while prohibiting collisions with multiple agents. The search also enforces the constraints in $\textit{cons}$, pruning states that contain a vertex/edge collision with agent $r_d$ at timestep $t_d$ for $(d, t_d) \in \textit{cons}$.
The \textsc{PlanPath} procedure returns $\pi$, a valid $W$-step path if it exists, $r_d$, an agent that this path collides with, and $t_d$, the timestep of the collision. Since multiple $\pi$ might exist, \textsc{PlanPath} returns the path that minimizes $h(\pi[W], g_i)$, where $h$ is the shortest path distance function and $g_i$ is agent $r_i$'s next goal (i.e. the path that takes $r_i$ closest to its goal).

The last difference occurs during backtracking. 
When no path is found for an agent (line \ref{alg:no_sol}), \textsc{PushPlan} restores its old path in $\Pi$ and returns \textsc{false}, forcing the preceding agent to replan. EPIBT would continue by enumerating another candidate path for the preceding agent. For large $W$, however, exhaustive path enumeration becomes prohibitively expensive. Instead, PUSH records the failed collision as a constraint $(r_d, t_d)$ (line \ref{alg:add_avoid_set}), requiring future A* searches to avoid the vertex or edge collision with agent $d$ at timestep $t_d$. This prevents revisiting the same failure on subsequent loops and provides an effective way to explore alternative paths.

\subsection{Theoretical Analysis}
Like TP and EPIBT, PUSH guarantees to return collision-free paths for all agents at every timestep in finite time. This is because \textsc{PushPlan} on \Cref{alg:push} \cref{line:push_plan_call} always returns \textsc{true}, as it can at least find a path that is identical to $r_i$'s old path in $\Pi$. Moreover, \textsc{PushPlan} always terminates in finite time because (1) the recursion depth of \textsc{PushPlan} is bounded by the number of agents, since each agent is added to the protected set $\mathcal{H}$ whenever it pushes another agent, preventing cyclic recursion; and (2) the number of iterations of the loop (\cref{line:loop}) is bounded by the number of possible collisions an agent can encounter within $W$ timesteps. The following theorem formalizes this result; its proof is provided in the appendix, along with a time complexity analysis.

\begin{theorem}
At every timestep, PUSH guarantees to return $\Pi$ in finite time, and $\Pi$ is guaranteed to be collision-free. 
\end{theorem}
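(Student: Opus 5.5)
We argue by induction over timesteps, using the invariant that at the start of every call to \Cref{alg:push} the stored set $\Pi$ consists of $W$-step paths that are pairwise collision-free, and that each path starts at its agent's current position. The base case is immediate. Before timestep 0, every path in $\Pi$ is $W$ waits at pairwise distinct start vertices, so there are no vertex or edge collisions.

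For the inductive step, we first handle line \ref{line:do_action}. Dropping the executed action and appending a \textsc{wait} keeps the paths collision-free. The first $W-1$ steps are a time shift of a collision-free plan. The appended step keeps every agent at its step-$(W-1)$ vertex; these vertices are pairwise distinct, and nobody moves on that step, so no edge swap can occur. The bookkeeping in lines \ref{line:queue}--\ref{line:sort_priority} does not touch $\Pi$ and is clearly finite.

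The core is a lemma about \textsc{PushPlan}. We prove by induction on the recursion that if $\Pi$ is collision-free on entry, then on exit $\Pi$ is again collision-free, whether the call returns \textsc{true} or \textsc{false}. It returns \textsc{false} only at line \ref{alg:max_visit}, where $\Pi$ is untouched, or at line \ref{alg:no_sol}. In the latter case we must show that every recursive call made earlier restored its own changes. We would strengthen the hypothesis: on a \textsc{false} return, $\Pi$ equals its value at entry. This holds at line \ref{alg:no_sol} only if every nested call made in earlier loop iterations itself returned \textsc{false} and restored, which the inductive hypothesis supplies. The subtle point is that $r_i$'s path was overwritten in earlier iterations, but it is restored to $\pi^{\mathrm{old}}$.

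On a \textsc{true} return, $\pi$ collides with at most one agent outside $\mathcal{H}$ and with nothing in $\mathcal{H}$, as guaranteed by \textsc{PlanPath}. If $r_d$ exists, the recursive call succeeds and leaves a collision-free $\Pi$ with $\Pi[r_i]=\pi$ fixed, because $r_i\in\mathcal{H}$ during that call and is therefore treated as an obstacle. One point needs care. \textsc{PlanPath} must be specified so that its output collides with no agent other than $r_d$; the invariant then follows. I expect the main obstacle to be checking that the single collision with $r_d$, possibly repeated at several timesteps, is fully resolved. This holds because the recursive planner for $r_d$ treats $r_i$'s new path as a hard constraint.

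Next we show that the top-level call at line \ref{line:push_plan_call} returns \textsc{true}. Here $\textit{visitCnt}[r_i]=0<M$, and $\pi^{\mathrm{old}}$ is collision-free against all agents. So \textsc{PlanPath} returns some path in the first iteration, and each later iteration adds a constraint that $\pi^{\mathrm{old}}$ never violates, since $\pi^{\mathrm{old}}$ collides with no one. Hence $\pi^{\mathrm{old}}$ remains feasible with $r_d$ absent, and the loop cannot end with $\pi=\emptyset$.

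For finiteness, each \textsc{PlanPath} call is an A* search over a finite space of (vertex, time $\le W$, displaced agent) states, so it terminates. Within one invocation, each loop iteration adds a new pair $(r_d,t_d)$ to $\textit{cons}$. The pair is new because the returned path satisfies $\textit{cons}$ and therefore cannot contain an already-constrained collision. There are at most $N\cdot W$ such pairs, so the loop runs at most $NW+1$ times. The recursion depth is at most $N$, because $\mathcal{H}$ grows strictly with each level and $r_d\notin\mathcal{H}$. In addition, $\textit{visitCnt}$ caps the total number of invocations per top-level call at $MN$. Together these give a finite bound. Since $|Q|\le N$, \Cref{alg:push} terminates, and the invariant gives the collision-free $\Pi$ for the next timestep.
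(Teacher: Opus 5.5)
Your overall route is the paper's, but one step does not follow from the lemma as you stated it. Your lemma on \textsc{PushPlan} assumes $\Pi$ is collision-free on entry. You apply its conclusion to the recursive call on $r_d$. That call is entered immediately after $\Pi[r_i]\gets\pi$, when $\Pi$ contains a collision between $r_i$ and $r_d$. So your induction hypothesis says nothing about that call, and ``the recursive call succeeds and leaves a collision-free $\Pi$'' is not justified. Your remark that $r_d$'s planner treats $\pi$ as a hard constraint only handles the pair $(r_i,r_d)$. It does not cover the fresh collision that $r_d$'s new path may have with a further agent, and that is exactly what the induction must carry down the chain.

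The repair is to weaken the precondition to ``every collision in $\Pi$ involves $r_i$'' and split the conclusion into two parts.
\begin{itemize}
\item[(a)] Unconditionally, a \textsc{false} return leaves $\Pi$ exactly as on entry, and a call with protected set $\mathcal{H}'$ writes only to $\Pi[x]$ for $x\notin\mathcal{H}'$.
\item[(b)] Under the precondition, a \textsc{true} return leaves $\Pi$ collision-free.
\end{itemize}
For (b), the returned $\pi$ avoids $\mathcal{H}$ and collides with no agent other than $r_d$. After the assignment, every collision therefore involves $r_d$, so the nested call satisfies the precondition. By (a), that call never modifies $\Pi[r_i]$, because $r_i\in\mathcal{H}\cup\{r_i\}$. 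By induction, its \textsc{true} return gives a collision-free $\Pi$. If $r_d$ does not exist, replacing $r_i$'s path removes the only collisions. The top-level call meets the precondition trivially. Part (a) keeps the other paths fixed across loop iterations, so $\pi^{\mathrm{old}}$ stays feasible there.

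With this repair your argument coincides with the paper's: recursion is bounded by the growing $\mathcal{H}$, the loop is bounded by the growing \textit{cons}, and the top-level call cannot return \textsc{false} because \textit{cons} never excludes $\pi^{\mathrm{old}}$. It is also more careful than the appendix in three ways. The appendix uses the restoration property (a) only implicitly. It does not spell out the base case or the shift-and-wait step. Your $O(NW)$ bound on loop iterations via distinct pairs $(r_d,t_d)$ matches the main-text sketch and is tighter than the appendix's count of $W$-step paths.
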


Furthermore, we show that PUSH preserves the same deadlock-freedom guarantee as EPIBT: every agent is guaranteed to reach its goal within a finite number of timesteps if the longer elapsed time is used as the priority ordering, the graph is biconnected, and TCT is 0 (i.e., agents do not perform tasks upon reaching their goals). The following theorem formalizes this result; its proof and a discussion of why the guarantee may not hold for either PUSH or EPIBT when TCT is non-zero are provided in the appendix.

\begin{theorem}
    If the longer elapsed time is used as the priority ordering, the graph is biconnected, and TCT is 0, then PUSH is deadlock-free: every agent reaches its goal within a finite number of timesteps.
\end{theorem}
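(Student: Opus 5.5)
We argue along the lines of the PIBT and EPIBT reachability proofs. The idea is to show that the agent with the highest priority always makes progress toward its goal, and then to use the elapsed-time ordering so that every agent eventually becomes the highest-priority one. Under the longer-elapsed-time ordering, an agent's priority grows by one each timestep until it reaches its goal; because TCT is $0$, it then immediately gets a new goal and its priority resets. Suppose some agent $r_i$ never reaches its goal. Its elapsed time then grows without bound. Every other agent either reaches goals and resets, or is also stuck. Among the agents that never reach their goals, take the one with the largest elapsed time, breaking ties by ID as in \cref{line:sort_priority}. From some finite timestep $t^*$ on, this agent, call it $r_\star$, strictly outranks every other agent for the rest of time. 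Hence $\mathcal{H}=\emptyset\cup\mathcal{H}_0$ whenever $r_\star$ is replanned, and $\mathcal{H}_0=\emptyset$ because TCT is $0$.

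The core step is a progress lemma: whenever $r_\star$ is replanned, the path $\textsc{PushPlan}$ commits to strictly decreases $h(\cdot,g_\star)$ within its first $K$ steps, or at least by its endpoint at step $W$, with $W\ge 1$. Three facts go in order. (i) $r_\star$ is replanned at least every $K$ timesteps, or sooner when its next action is a wait (\cref{line:check_t2r}), and its committed path is executed, because no agent ranked above it could later push it. (ii) If $r_\star$ is not at $g_\star$, \textsc{PlanPath} can consider a path whose first move goes to a neighbour $v$ closer to $g_\star$ and whose remaining steps wait there or continue along a shortest path. (iii) That path collides with at most one lower-priority agent $r_d$, possibly several times. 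The recursive push on $r_d$ must then succeed on some constraint branch. This is where biconnectivity enters. As in PIBT's proof for biconnected graphs, we would argue that the chain of displaced agents, each protected from the agents above it, can always be routed out of the way because removing one vertex leaves the graph connected. Concretely, we would mimic PIBT's inductive argument: each pushed agent either moves to a free vertex or pushes another agent not yet in $\mathcal{H}$, and the recursion terminates by the previous theorem.

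We expect (iii) to be the main obstacle, for three reasons. First, $M$ bounds how often an agent may be revisited. Second, the constraint-adding backtracking in \cref{alg:add_avoid_set} differs from EPIBT's path enumeration. Third, each displacement involves $W$-step paths rather than single steps. Our plan is to reduce to EPIBT's guarantee. The aim is to show that the set of candidate paths $\textsc{PlanPath}$ explores through the accumulated $\textit{cons}$ includes every path EPIBT would try, at least the one-step-progress-then-wait family, so whenever EPIBT's search for the top agent succeeds, PUSH's does too. If the reduction needs $M$ large enough, we would state that assumption explicitly, inheriting it from EPIBT.

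With the lemma in hand, the distance $h(s_\star,g_\star)$ is a nonnegative integer. By (i) and the lemma, it strictly decreases at least once every $K$ timesteps after $t^*$, so $r_\star$ reaches $g_\star$ in finite time. This contradicts the choice of $r_\star$ as an agent that never reaches its goal. Hence every agent reaches each goal within finitely many timesteps, which is the deadlock-freedom claimed in the theorem.
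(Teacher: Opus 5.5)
\textbf{There is a genuine gap in your progress lemma, at steps (ii)--(iii).} Your priority bookkeeping is sound. Choosing, among agents that eventually stop reaching goals, the one with the earliest last arrival means it outranks everyone from some $t^*$ on, so $\mathcal{H}=\emptyset$. This is a valid substitute for the paper's ``the next agent becomes highest priority'' induction.

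The problem is the candidate path in (ii). A fresh path that moves to a closer neighbour $v$ at step~1 and then waits or follows a shortest path need not collide with at most one agent. The other agents' committed $W$-step paths in $\Pi$ are arbitrary: one agent may sit at $v$ at step~1 and a different one pass through $v$ at step~3. \textsc{PlanPath} never returns a path colliding with two agents, so nothing you wrote shows that any progressing path is available. Neither PIBT's one-step argument nor your proposed reduction to EPIBT's ``progress-then-wait'' family fixes this. That family has exactly this multi-collision problem, and so do the replacement paths of the displaced agents in your chain.

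\textbf{The missing idea is to build the candidate from $r_\star$'s own shifted path.} After the shift-and-append-\textsc{wait} at the start of the timestep, the first $W-1$ steps of $\Pi[r_\star]$ are already collision-free, and the last action is a \textsc{wait}. Copy $\Pi[r_\star]$ for $W-1$ steps and replace only the final step by a move one vertex closer to $g_\star$. This path collides with at most one agent, and only at step $W$. No swap is possible, because $\Pi[r_\star][W-1]=\Pi[r_\star][W]$ is unoccupied by others at step $W$.

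Displaced agents are handled the same way. Each copies its path for $W-1$ steps and deviates only at step $W$, to the next vertex of a cycle through $(\pi[W-1],\pi[W])$ given by biconnectivity. The PIBT rotation argument then takes place entirely at the last step. You also need the paper's observation that the accumulated \textit{cons} cannot exclude all such final-step-deviation paths before one of them has been returned.

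\textbf{This changes the potential too.} What strictly decreases at each replan of $r_\star$ is the endpoint distance $h(\Pi[r_\star][W],g_\star)$, which the shift-and-wait leaves unchanged between replans. It is not $h(s_\star,g_\star)$ decreasing every $K$ steps, which your final paragraph uses despite your ``or at least by its endpoint'' hedge. You therefore need a separate step: once the endpoint equals $g_\star$, later replans never postpone arrival, so $r_\star$ actually reaches its goal.
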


\subsection{Anytime Optimization}
PUSH takes advantage of EPIBT-style LNS to optimize its solutions during alloted time after \Cref{alg:push} completes and before the next planning cycle must start. Similarly to EPIBT, PUSH selects a random agent $r_i$ that is not doing tasks (i.e., not in $\mathcal{H}_0$) and call \textsc{PushPlan} on this agent. Given the old set of paths $\Pi^{\mathrm{old}}$ and the updated paths $\Pi$, we compare them to see if the replanning improved the solution quality. To do so, we calculate $\sum _{i=1}^Nh(e_i, g_i)$, where $e_i$ is the last position in the path of agent $r_i$. The set with the lesser sum is the one with agents closer to their goals on average, and is thus the set of paths we should accept. We repeat this process until the time limit is reached.

\section{Experimental Results}\label{sec:exp}
In this section, we empirically evaluate PUSH against several baselines. Our experiments are performed on a machine with an AMD Threadripper 3990X 64-core processor with 192 GB of RAM. We base our implementation on the official repository for EPIBT, made in C++.

\subsection{Experiment Setup}
\subsubsection{Testing Scenarios}
\begin{figure*}[t]
\centering
\includegraphics[width=0.95\textwidth]{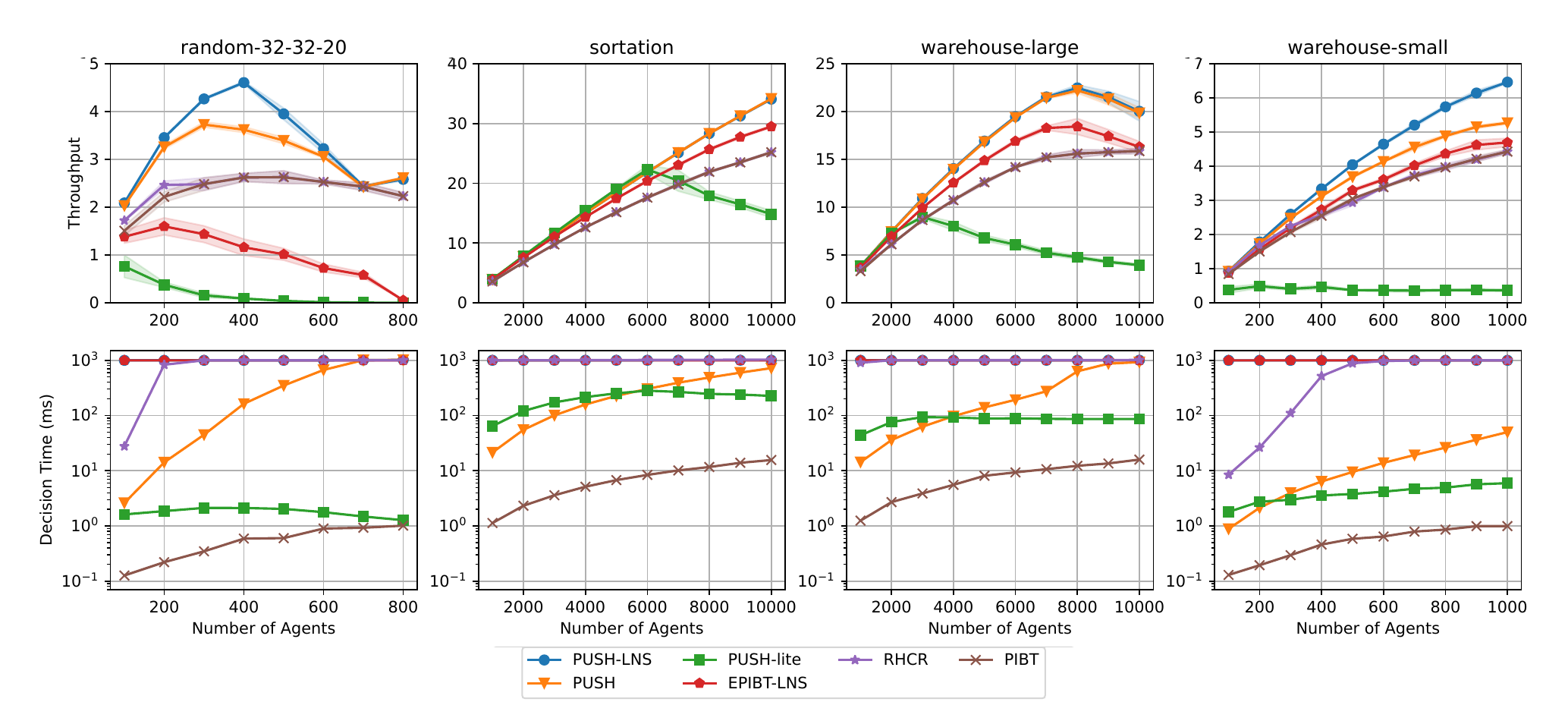} 
\caption{Comparison of PUSH with baselines on TCT maps assuming a task completion time of 20 timesteps. 
}
\label{fig:tct}
\end{figure*}
We evaluate PUSH across two long-horizon scenarios: (1) TCT Maps (\textsc{random-32-32-20}, \textsc{sortation}, \textsc{warehouse-large}, \textsc{warehouse-small}) test the algorithms with large TCTs. These maps can be found in the MAPF benchmark \cite{benchmark} or the League of Robot Runners competition \cite{lorr}. The goal positions are evenly sampled from the free space in the map. (2) Complex Maps (\textsc{symbotic},\footnote{It is a representative warehouse map used by Symbotic.} \textsc{delivery}) feature long corridors and dead-ends with zero TCTs. The \textsc{Symbotic} map \cite{symbotic} contains many vertical parallel and a long horizontal deck. The \textsc{delivery} map \cite{delivery} features many short tree-like pathways with goals deep inside. The goals for each agent are sampled alternately between green and blue tiles. \Cref{fig:maps} shows these maps.

\subsubsection{Baselines}
\begin{figure}[t]
\centering
\includegraphics[width=0.475\textwidth]{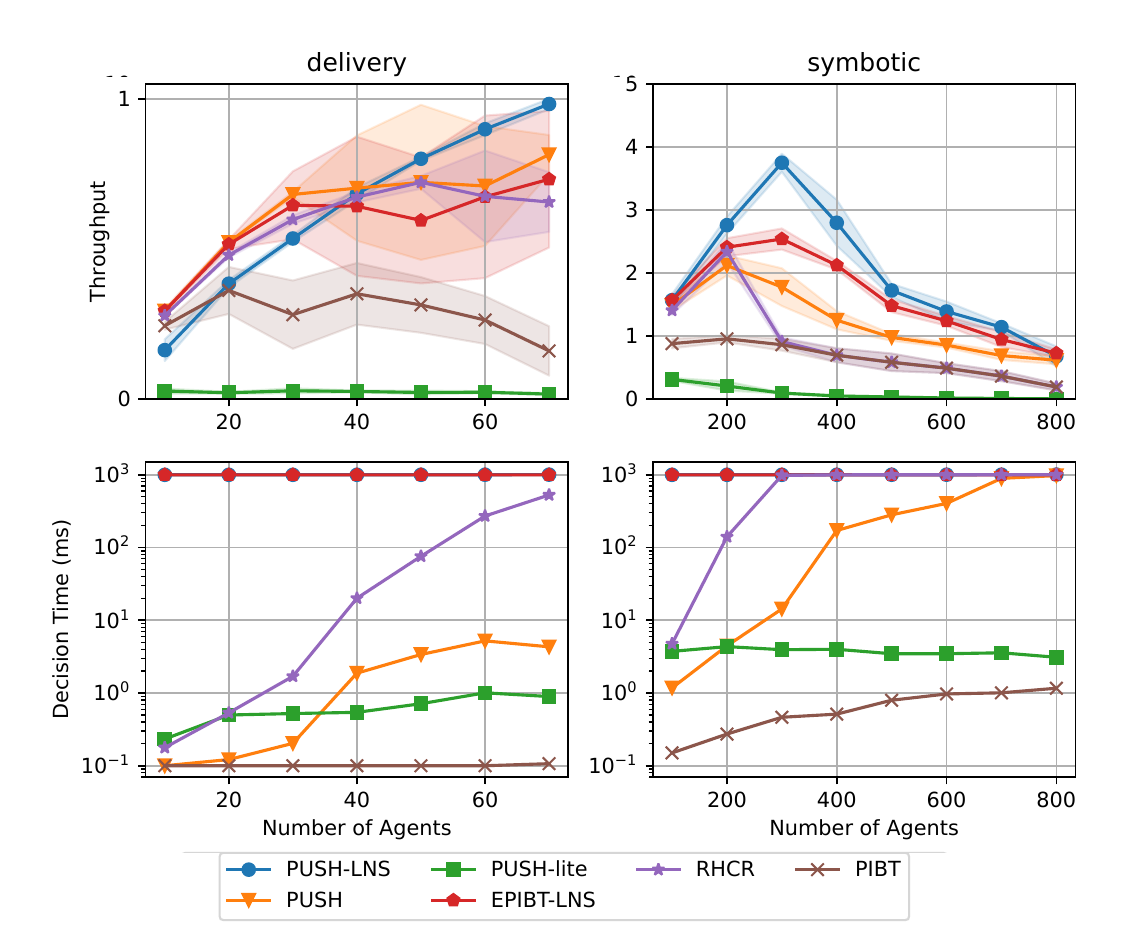} 
\caption{Comparison of PUSH with baselines on complex maps.}
\label{fig:complex}
\end{figure}
\begin{figure*}[t]
\centering
\includegraphics[width=0.95\textwidth]{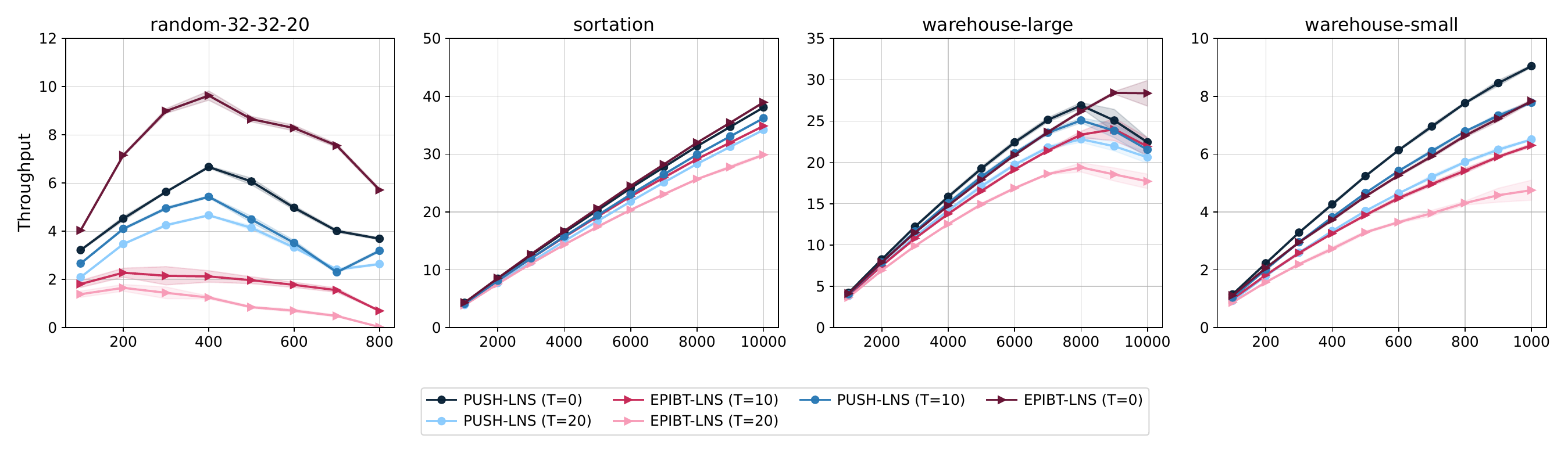} 
\caption{Comparison of PUSH-LNS with EPIBT-LNS with various TCTs. Brighter colors indicate longer TCTs.}
\label{fig:tct_compare}
\end{figure*}
\begin{figure}[t]
\centering
\includegraphics[width=0.475\textwidth]{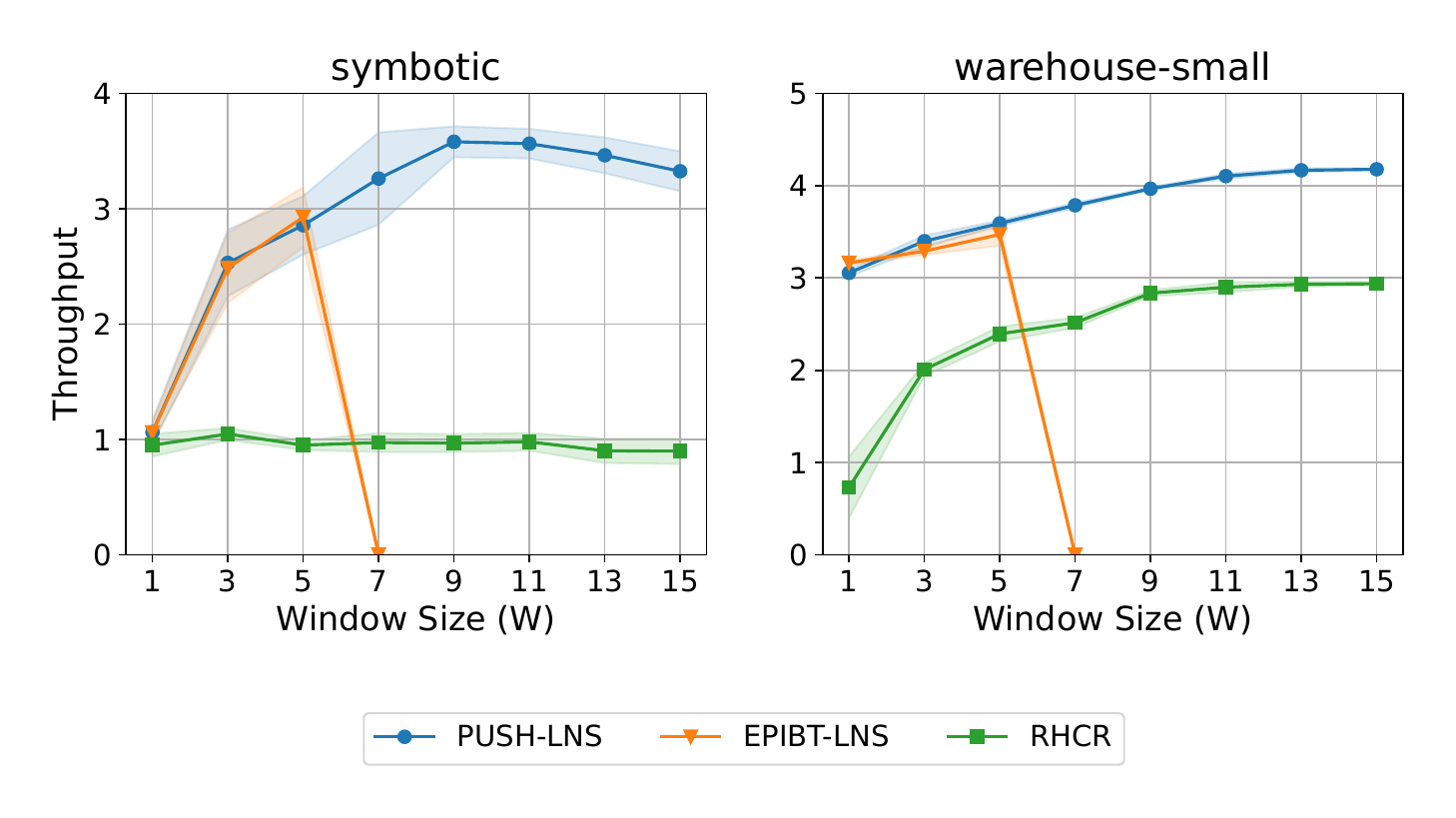} 
\caption{Comparison of PUSH with baselines when varying planning window sizes. We used 300 agents and a TCT of 0 timesteps in \textsc{symbotic}. We used 500 agents and a TCT of 20 timesteps in \textsc{warehouse-small}.}
\label{fig:window_abl}
\end{figure}
We evaluate PUSH against representative baselines from across the main LMAPF paradigms. We exclude subset-planning frameworks (e.g., Token Passing) due to their strict topological map requirements. From windowed planning, we select RHCR with a PBS solver (RHCR-PBS), the current state of the art. From reactive planning, we select the state-of-the-art training-free baselines PIBT and EPIBT with LNS (EPIBT-LNS). Additionally, we include an ablation model, PUSH-lite, which omits recursive priority pushing and instead assigns newly replanned agents the lowest priority. To evaluate the impact of anytime optimization, we test two variants of our method: PUSH and PUSH-LNS. 

\subsubsection{Hyperparameters}
The 3 important hyperparameters $W$, $K$, and $M$ are set based on the algorithm. PUSH, PUSH-LNS, PUSH-lite, and RHCR all use $W = 10$ and $K = 5$ for fair comparison. EPIBT uses $W = 3$ and $K = 1$ (best reported parameters in the original paper), while PIBT uses $W = 1$ and $K = 1$ as it does not support multi-step planning. PUSH, PUSH-LNS, and EPIBT set $M = 10$. Furthermore, the method of priority ordering is important to consider for both PUSH variants as well as PIBT and EPIBT. For the TCT maps, we utilize standard ``close goal" priorities, where agents closer to their goals receive higher priorities. For the complex maps, we switch to ``far goal" priorities where agents farther to their goals receive higher priorities. This intuitively lets agents who just completed a task deep within a dead end escape without being forced inside by agents taking space at the sole exit and empirically performs better.

\subsubsection{Evaluation Criteria}
We run each experiment for 1,000 timesteps with a 1-second per-timestep planning budget and report throughput (goals reached per timestep) averaged over 5 randomized trials, together with its standard deviation (shown as shaded regions in the figures). We also report average decision time per timestep. 
To enforce the planning budget, if any algorithm besides RHCR times out at a given timestep, the algorithm terminates, and unplanned agents fall back on their paths from the previous timestep, which are guaranteed to be collision-free. Since RHCR replans every $K$ timesteps, it receives $K$ seconds of planning time per rolling horizon for fair comparison. If it times out, we default to PIBT for $K$ steps before returning back to RHCR.

\subsection{Results on TCT Maps}
Figure \ref{fig:tct} compares the performance of PUSH and baselines in the TCT maps, where agents remain stationary at goal locations for $T=20$ timesteps upon arrival. Across all evaluated maps and fleet sizes, PUSH and PUSH-LNS achieve the strongest throughput. The results highlight distinct failure modes: while RHCR provides competitive throughput at low agent densities, the planner rapidly exhausts its planning budget as congestion increases, forcing search truncation and a fallback to myopic PIBT execution that degrades throughput. Purely reactive planners such as PIBT and EPIBT-LNS maintain smooth throughput scaling in layouts with abundant short detours like \textsc{sortation}, but struggle in narrow corridors (\textsc{warehouse-small}, \textsc{warehouse-large}) and dense obstacles (\textsc{random-32-32-20}). Furthermore, while the PUSH-lite ablation matches PUSH at low agent counts in large maps, its performance suffers a catastrophic collapse as fleet size scales beyond 3,000 agents in \textsc{warehouse-large} and 6,000 agents in \textsc{sortation}. In contrast, PUSH and PUSH-LNS achieve  impressive maximum throughputs, delivering a 25\% maximum throughput improvement in \textsc{warehouse-large} and a 300\% maximum throughput improvement in \textsc{random-32-32-20} over EPIBT-LNS.

\subsection{Results on Complex Maps}
To isolate the structural impact of maps on algorithmic performance, Figure \ref{fig:complex} evaluates on complex maps with instantaneous task completion ($T = 0$). PUSH-LNS achieves the highest throughput, at around 1.0 in \textsc{delivery} and reaching 3.8 in \textsc{symbotic}, while maintaining tight standard deviations. The experimental results reveal distinct failure modes among the baselines under topological constraints: EPIBT-LNS struggles with high agent densities within deep corridor networks, and its performance is highly sensitive to start-goal configurations, manifesting as high throughput variance (reflected in the wide shaded standard deviations in \textsc{delivery}). As fleet size increases, RHCR is unable to match PUSH-LNS in \textsc{delivery}, and in \textsc{symbotic} it frequently exhausts its 5-second per-horizon budget. Finally, the performance failure of PUSH-lite across both maps illustrates the role of recursive priority displacement in non-well-formed maps.

\subsection{TCT Ablation}
Figure \ref{fig:tct_compare} evaluates the performance sensitivity of PUSH-LNS and EPIBT-LNS to varying Task Completion Times ($T \in \{0, 10, 20\}$), omitting decision time plots as both algorithms fully utilize the 1-second per-timestep planning budget. The severe vulnerability of myopic planners to non-zero TCTs is particularly apparent in the dense \textsc{random-32-32-20} map: while EPIBT-LNS achieves high throughput under standard LMAPF ($T = 0$), its performance degrades when $T = 10$ or $20$. In contrast, PUSH-LNS demonstrates exceptional performance, retaining high throughput outperforming EPIBT-LNS across all non-zero TCTs. While the topological flexibility and abundant alternative pathways of the \textsc{sortation} map partially cushion EPIBT-LNS from degradation, the deep, narrow corridors of the warehouse environments severely penalize reactive planners.

\subsection{Window-Size Ablation}
\Cref{fig:window_abl} illustrates the effect of planning window size on PUSH-LNS, EPIBT-LNS, and RHCR. For the \textsc{symbotic} map, we used 300 agents and a TCT of 0. For the \textsc{warehouse-small} map, we used 500 agents and a TCT of 20. An interesting occurrence is that PUSH-LNS and EPIBT-LNS report very similar throughputs with the same window sizes of 1, 3, and 5. This result is expected because both planners rely on the same recursive backtracking mechanism for planning agent paths. However, EPIBT-LNS falls off dramatically at $W = 7$, due to the exponential explosion in number of possible operations at high $W$. In contrast, PUSH-LNS faces no such restriction, giving it access to long horizon planning. RHCR cannot keep up with PUSH-LNS or EPIBT-LNS at this scale, and beyond $W = 8$ in \textsc{warehouse-small} and $W = 3$ in \textsc{symbotic}, it completely defaults to PIBT planning.

\section{Conclusion and Future Directions} \label{sec:conclusion}
We introduce PUSH, an LMAPF framework that matches the scale of reactive planners like EPIBT while maintaining the benefits of long-horizon reasoning. PUSH uses subset planning and windowed planning to generate long-horizon plans for the planning queue, while utilizing an EPIBT-style recursive backtracking structure to handle planning in highly congested scenarios. This makes PUSH the first LMAPF algorithm to utilize subset planning while supporting general graph structures and to be able to coordinate up to 10,000 agents with long-horizon planning. 

\subsubsection{Future Direction 1: Long-Horizon Applications}
In this work, we demonstrate PUSH's strong empirical performance in two representative scenarios. In practice, many other MAPF applications remain challenging due to the need for simultaneous long-horizon reasoning and large-fleet coordination. We hope our PUSH framework opens a compelling pathway toward solving these challenges. For example, real-world robots are subject to kinodynamic constraints, such as robot velocity/acceleration limits. Successful planners in this setting must anticipate momentum and braking distances during dense multi-agent interactions. Existing approaches rely on windowed planning to capture these long-horizon dependencies~\cite{YanAAAI25} but scale to only a few hundred agents. Extending PUSH’s low-level A* planner to incorporate kinodynamic state spaces could potentially scale these solutions to support thousands of agents.
Similarly, MAPF under strict deadlines or delivery windows remains bottlenecked by joint windowed planning overheads \cite{mapf-deadline,amapf-deadline}. Integrating task urgency directly into PUSH's dynamic priority structure would give higher priority deliveries precedence during recursive planning, paving the way for real-time deployment at scale.

\subsubsection{Future Direction 2: Algorithmic Enhancements}
While kept simple to establish a new LMAPF paradigm, PUSH offers several promising avenues for future enhancement.
First, the current planning horizon $W$ is fixed. A natural extension is to dynamically vary a $W_i$ per agent based on the local agent density.  
Second, PUSH inherits EPIBT's restriction that an agent may push at most one lower-priority agent. Recent work~\cite{Jiang2026md-pibt} extends EPIBT to support pushing multiple agents, and this capability could be incorporated into PUSH. 
Third, agents in the planning queue $Q$ are replanned sequentially. An alternative is to dynamically merge some agents into a meta-agent and replan them jointly using more optimal MAPF algorithms, which would combine the joint-optimization strengths of RHCR with the extreme scalability of PUSH. Finally, recent advances for PIBT, such as learning-based improvements \cite{scrimp,sillm} and congestion-aware guidance \cite{ggo,traffic-flow}, could be incorporated into PUSH to further boost computational speed and coordination effectiveness.

\bibliographystyle{ieeetr}
\bibliography{citation}

\appendix

\section{Full Theoretical Analysis}
We derive the two theorems stated in the main body.
\addtocounter{theorem}{-1}
\addtocounter{theorem}{-1}

\begin{theorem}
At every timestep, PUSH is guaranteed to return $\Pi$ in finite time, and $\Pi$ is guaranteed to be collision-free. 
\end{theorem}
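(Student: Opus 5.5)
The argument has two parts: an invariant that gives collision-freedom, and a termination argument for the recursion in \textsc{PushPlan}.

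\textbf{Invariant.} At every point where \Cref{alg:push} is not inside a call to \textsc{PushPlan}, $\Pi$ is a set of $W$-step paths with no vertex or edge collisions. Moreover, each path starts at its agent's current position $s_i$.

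We prove the invariant by induction on timesteps. At timestep $0$ it holds, since every agent waits in place at a distinct start vertex. For the step case, suppose it held at the end of the previous timestep. Removing $\pi[0]$ from every path keeps the remaining suffixes mutually collision-free. Appending a \textsc{wait} cannot create a collision: the last vertices of the paths were distinct, and a wait causes no edge swap. The paths now begin at the agents' new positions.

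\textbf{Each call to \textsc{PushPlan} preserves the invariant.} We prove by induction on the recursion depth that every call to \textsc{PushPlan}$(r_i,\mathcal{H},\Pi,\textit{visitCnt})$ has one of two outcomes:
\begin{enumerate}
\item It returns \textsc{false} and leaves $\Pi$ exactly as it found it.
\item It returns \textsc{true}, and the new $\Pi$ is collision-free. Only paths of agents outside $\mathcal{H}$ have changed, and the path of $r_i$ is collision-free against all others.
\end{enumerate}
The failure cases are the visit-cap return and the case $\pi=\emptyset$, where $\pi^{\mathrm{old}}$ is restored. For a recursive failure we also need that the failing callee restored its own path. Then the caller either loops, overwriting only its own entry, or eventually restores $\pi^{\mathrm{old}}$. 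So every failure path restores $\Pi$ exactly.

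On success, the path returned by \textsc{PlanPath} collides only with $r_d\notin\mathcal{H}$, and with no other agent. If $r_d$ does not exist, we are done. Otherwise the recursive call succeeds with $r_i\in\mathcal{H}\cup\{r_i\}$ protected. By induction, $r_d$'s new path avoids $r_i$'s path, and that call changed only agents outside $\mathcal{H}\cup\{r_i\}$, so $r_i$'s path is untouched. Hence the only conflict has been removed and no new conflict involves $r_i$.

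\textbf{The top-level call returns \textsc{true}.} At the top level (\Cref{alg:push}, \cref{line:push_plan_call}), $\textit{visitCnt}$ has just been reset, so the cap check passes. Because $\Pi$ is collision-free, $r_i$'s current path $\pi^{\mathrm{old}}$ collides with nobody. It is therefore a feasible output of \textsc{PlanPath} with no displaced agent under any constraint set, since constraints only prune collisions with specific agents. So \textsc{PlanPath} never returns $\emptyset$ at this level. One point needs care: \textsc{PlanPath} might return a different optimal path that does displace some $r_d$, and the push might then fail. In that case $(r_d,t_d)$ is added to $\textit{cons}$, and the loop continues. Since $\pi^{\mathrm{old}}$ always stays feasible, the loop can end only with \textsc{true}.

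\textbf{Finite termination.} We need two facts:
\begin{enumerate}
\item \textsc{PlanPath} is a windowed space-time A* on a finite state space. Its states are vertex, time $\le W$, and displaced agent, giving at most $|V|(W+1)(N+1)$ states. So it terminates.
\item Within one activation of \textsc{PushPlan}, each loop iteration that does not return adds a new pair $(r_d,t_d)$ to $\textit{cons}$. It is new because the path just returned was required to avoid every pair already in $\textit{cons}$. There are at most $N(W+1)$ such pairs, so the loop runs finitely often.
\end{enumerate}
For the recursion, the protected set strictly grows along any call chain, and a call never pushes an agent in its protected set, so the recursion depth is at most $N$. The visit counter also bounds the total number of activations per top-level call by $MN$.

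Combining these bounds gives a finite bound on total work per timestep. Multiplying through even yields a crude explicit time-complexity bound.

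\textbf{Main obstacle.} The delicate step is the claim that a failed recursive call leaves $\Pi$ unchanged. Deep failures happen after intermediate overwrites like $\Pi[r_i]\gets\pi$. We must check that every such overwrite is undone either by restoring $\pi^{\mathrm{old}}$ or by a later successful overwrite. I would handle this by stating the "restore on failure, modify only outside $\mathcal{H}$ on success" property explicitly as the induction hypothesis above.
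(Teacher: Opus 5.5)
Your proof is correct and follows essentially the same route as the paper's. Recursion depth is bounded because the protected set grows along every call chain, and the inner loop is bounded because \textit{cons} gains a new constraint on every failed push. The top-level call cannot fail because $r_i$'s old, collision-free path always remains a feasible output of \textsc{PlanPath}.

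Your explicit restore-on-failure induction hypothesis makes the backtracking step more rigorous than the paper's informal chain argument. It should also state the precondition that every conflict in $\Pi$ at call time involves $r_i$. Counting distinct $(r_d,t_d)$ pairs bounds the loop by $N(W+1)$ iterations, which is sharper than the paper's bound by the number $P$ of $W$-step paths.
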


\begin{proof}
We first prove that PUSH always terminates in finite time; the key step is to show that the recursive call to \textsc{PushPlan} terminates in finite time.
The completion of \textsc{PushPlan} depends on both its recursion and its inner loop being bounded. The depth of recursion is bounded, as each pushed agent is added to the set $\mathcal{H}$, preventing cyclical pushing. Since the number of agents is finite, this prevents an infinite recursion. The internal loop (line \ref{line:loop} in Algorithm \ref{alg:push-plan}) is repeated as each recursive call to a \textsc{PushPlan} for agent $r_d$ fails. Upon each failure, the constraint set $\textit{cons}$ grows with a new spatiotemporal constraint (cannot collide with agent $r_d$ at timestep $t_d$). Since \textsc{PlanPath} cannot violate any previously created constraint, it cannot output the same path twice. This, combined with the fact that the number of $W$-step paths for an agent is bounded, means that there is a maximum number of times that the loop can repeat before \textsc{PlanPath} returns $\pi = \emptyset$ and the loop terminates.
Thus, the loop is bounded, and $\textsc{PushPlan}$ terminates in finite time. 

Next, we prove that, after each call of \textsc{PushPlan} on line \ref{line:push_plan_call} in Algorithm \ref{alg:push}, $\Pi$ is guaranteed to be collision-free. 
Suppose that this call is made on agent $r_1$. The \textsc{PlanAgent} procedure can only return (1) a path that is collision-free, (2) a path that collides with a single agent, or (3) no solution. In case 1, $\Pi[r_1]$ is replaced with a new collision-free path, after which we return \textsc{true} (line \ref{line:ret_true}).

In case 2, \textsc{PushPlan} is recursively called on a pushed agent. Suppose that this agent is $r_2$. Before this call returns, a number of recursive calls might be made (limited to the total number of agents with lower priority than $r_1$). Suppose each successive \textsc{PushPlan} call is made to agents in consecutive order (i.e. agent $r_j$ pushes agent $r_{j+1}$, etc.). Then each call adds an agent to the protected set (line \ref{alg:no_col}), forcing agent $r_j$ to avoid collision with $\{r_1, \cdots, r_{j-1}\}$, along with the original protected set $\mathcal{H}$ of $r_1$ (defined on line \ref{line:rec_stack_set} in Algorithm \ref{alg:push}). Eventually, since the recursive calls are limited, we will reach a \textsc{PushPlan} call on an agent $r_i$ that enters case 1 or 3. If this agent enters case 1, then it must have found a collision-free path with all agents. At this point, $\Pi$ is collision-free, since each successive \textsc{PushPlan} call avoids collisions with the agents in calls prior. Thus, each \textsc{PushPlan} call returns \textsc{true}, yielding a collision-free $\Pi$. The other scenario is that \textsc{PushPlan} cannot find a path for $r_i$ and we enter case 3. In this case, the old path of $r_i$ is restored (line \ref{alg:rec_fail_start}) and $r_{i-1}$ replans. If $r_2$ fails to plan, then agent $r_1$ is forced to replan. When this happens, a constraint is added to \textit{cons} (line \ref{alg:add_avoid_set}) and we commence a new iteration of the loop (line \ref{line:loop}). However, note that $\Pi[r_1]$ is a $W$-step collision-free path. This means it cannot be blocked by \textit{cons}, since if it was blocked by a constraint $(r_d, t_d)$, that would imply $r_d$ and $r_1$'s path intersect at time $t_d$, which is not possible for a collision-free path. Thus, $\Pi[r_1]$ cannot be blocked, and \textsc{PlanAgent} is guaranteed to return it in some loop iteration. Since this path satisfies case 1, it will always succeed. Since this does not change any of the paths, $\Pi$ remains collision-free. This also proves that case 3 never occurs for agent $r_1$, namely the agent that \textsc{PushPlan} tries to replan for on line \ref{line:push_plan_call} in Algorithm \ref{alg:push}.

Since each call of \textsc{PushPlan} on line \ref{line:push_plan_call} returns with collision-free $\Pi$ in finite time, the for loop in Algorithm \ref{alg:push} (line \ref{line:for-loop}) will terminate with collision-free $\Pi$ in finite time.
\end{proof}

Building on the above proof, we can derive the time complexity of PUSH. We characterize its time complexity by bounding the number of invocations of \textsc{PlanPath}, namely the A* search used to compute a $W$-step path for a single agent.

\begin{prop}
At each timestep, PUSH (Algorithm \ref{alg:push}) invokes the \textsc{PlanPath} procedure (i.e., an A* search) at most $N^2MP$ times, where $N$ is the number of agents, $M$ is the maximum revisit limit, $W$ is the planning window size, and $P$ is the number of possible $W$-step paths for an agent.
\end{prop}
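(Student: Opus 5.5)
The plan is a counting argument that reuses the termination analysis in the proof of Theorem 1. Every invocation of \textsc{PlanPath} happens on line \ref{alg:call_pap} of Algorithm \ref{alg:push-plan}, inside the loop of some (possibly recursive) call of \textsc{PushPlan}. So I will bound the number of such calls per timestep, then the number of loop iterations per call, and multiply. The factor $N^2 M$ should come from the calls and the factor $P$ from the iterations.

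\emph{Step 1 (top-level calls).} Algorithm \ref{alg:push} calls \textsc{PushPlan} on line \ref{line:push_plan_call} once per agent in $Q \subseteq \mathcal{A}$. There are therefore at most $N$ top-level calls per timestep, ignoring the anytime LNS phase, which lies outside Algorithm \ref{alg:push}.

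\emph{Step 2 (calls within one top-level call).} \textit{visitCnt} is reset on line \ref{line:visit_count_set} immediately before each top-level call. Within the resulting recursion tree, any invocation of \textsc{PushPlan} on an agent $r_j$ that gets past line \ref{alg:max_visit} increments $\textit{visitCnt}[r_j]$. Hence at most $M$ invocations per agent actually reach the loop and call \textsc{PlanPath}. Invocations rejected by the visit limit return \textsc{false} without any A* search, so they contribute nothing. Summing over the $N$ agents gives at most $NM$ loop-executing invocations per top-level call.

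\emph{Step 3 (iterations per invocation).} This is the step I expect to be the main obstacle. As argued in the proof of Theorem 1, each failed iteration adds a new constraint $(r_d, t_d)$ to \textit{cons}. That constraint is violated by the path $\pi$ just returned, since $\pi$ collides with $r_d$ at $t_d$. Because \textit{cons} only grows within one invocation and \textsc{PlanPath} respects all of it, the nonempty paths returned in successive iterations are pairwise distinct. So at most $P$ iterations can return a nonempty path.

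The subtlety is the final iteration that returns $\pi = \emptyset$, which naively gives $P+1$ rather than $P$. I would first try to rule out a wasted extra call as follows:
\begin{itemize}
\item For a top-level call, the proof of Theorem 1 shows that the old collision-free path is never excluded by \textit{cons}, so the loop ends in success and never reaches the empty case.
\item For recursive calls, I would check whether the agent's old path can be charged as one of the $P$ candidates. If it is never returned, then at most $P-1$ distinct nonempty paths are returned, and the empty call fits within the budget of $P$.
\item If neither argument goes through cleanly, I would interpret $P$ as counting the outcome ``no path'' among the possibilities, or state the bound as $N^2M(P+1)$.
\end{itemize}

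\emph{Step 4 (combine).} Multiplying the three bounds gives at most $N \cdot NM \cdot P = N^2MP$ invocations of \textsc{PlanPath} per timestep. Termination of each A* search, and hence finiteness, is already given by Theorem 1.
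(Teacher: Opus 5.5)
Your proposal is correct and follows the paper's proof: at most $N$ top-level calls, at most $NM$ invocations per top-level call that actually reach the loop (because \textit{visitCnt} is reset before each top-level call and capped at $M$), and at most $P$ loop iterations per invocation. The paper simply asserts the $P$ bound and ignores the final empty-path iteration. Your second bullet does close that point: in a recursive call on $r_d$, the old path $\Pi[r_d]$ collides with the pusher $r_i$, which now lies in the protected set $\mathcal{H}\cup\{r_i\}$, so \textsc{PlanPath} never returns it, leaving at most $P-1$ nonempty returns plus one empty result.
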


\begin{proof}
    The loop of \textsc{PushPlan} (Algorithm \ref{alg:push-plan} line \ref{line:loop}) can be repeated a maximum of $P$ times, once for each potential $W$-step path. The \textsc{PushPlan} algorithm itself can recursively call itself and proceed to this loop a maximum of $M$ times for each of the $N$ agents, since \textit{visitCnt} restricts the number of times \textsc{PushPlan} can be called per agent to $M$ times (line \ref{alg:max_visit}). Thus, the total number of \textsc{PlanPath} calls in \textsc{PushPlan} is $NMP$. Since \textit{visitCnt} is reset to $0$ for all agents before every \textsc{PushPlan} call in Algorithm \ref{alg:push} (line \ref{line:visit_count_set}), this bound remains the same for all of these calls. If every agent is in the planning queue $Q$ in Algorithm \ref{alg:push}, then \textsc{PushPlan} is called once for each, thus the total \textsc{PlanPath} calls of PUSH is $N^2MP$.
\end{proof}

Next, we prove that PUSH preserves the deadlock-free guarantee of EPIBT, namely all agents can eventually reach their goals under the same assumptions as EPIBT. We accomplish this by showing that the highest-priority agent can always make progress towards its goal. Then we will show that this leads to all agents reaching their goals.

\begin{figure*}[t]
\centering
\includegraphics[width=0.95\textwidth]{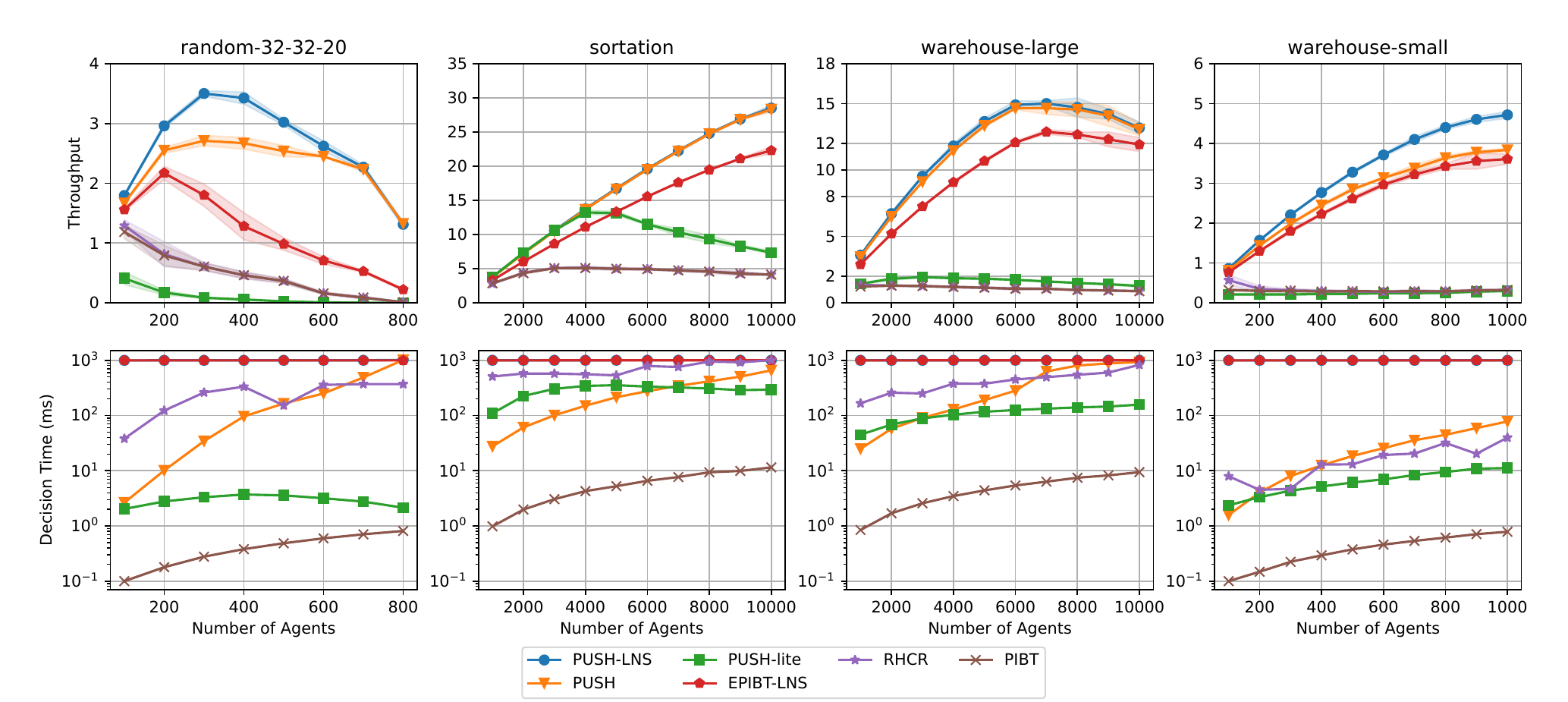} 
\caption{Comparison of PUSH with baselines on the rotation motion model, assuming a task completion time of 20 timesteps.}
\label{fig:rotation}
\end{figure*}

\begin{lemma}
    If the graph is biconnected and TCT is 0, then, whenever the highest-priority agent, denoted by $r_1$, is replanned, PUSH guarantees to find a new collision-free path $\pi$ for agent $r_1$ that strictly decreases its distance to its goal, i.e., $h(\pi[W], g_1) < h(\Pi[r_1][W], g_1)$, where $h$ is the shortest path distance function and $\Pi[r_1]$ refers to the path of $r_1$ after line \ref{line:do_action} of Algorithm \ref{alg:push}.
\end{lemma}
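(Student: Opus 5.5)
We prove the lemma by exhibiting one candidate path for $r_1$ that strictly decreases $h$ and then arguing that \textsc{PushPlan} actually returns such a path. Two facts about the highest-priority agent set this up.

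First, since TCT is $0$, no agent ever sits in $\mathcal{H}_0$ for more than a trivial moment. An agent at its goal with a no-op path immediately receives a new goal, so $\mathcal{H}_0=\emptyset$ in the relevant sense. Consequently, when $r_1$ is replanned, its protected set is $\mathcal{H}=\emptyset$, and every other agent can be pushed.

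Second, $r_1$ is never displaced inside someone else's recursion. This is because $r_1$ belongs to $\mathcal{H}$ for every other agent's call, so its path changes only through its own call.

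Let $\pi^\ast$ be the path that follows a shortest path from $s_1$ toward $g_1$ for the whole window, padded with waits only if $g_1$ is reached. Note that $\Pi[r_1]$ after line~\ref{line:do_action} ends with an appended \textsc{wait}. If $r_1$ has not already reached $g_1$ within that path, then $\pi^\ast$ satisfies $h(\pi^\ast[W],g_1)<h(\Pi[r_1][W],g_1)$. The case where $\Pi[r_1]$ already reaches $g_1$ is handled by the new-goal assignment: with TCT $0$ the goal is refreshed, so we may assume $h(\Pi[r_1][W],g_1)>0$. If needed, we interpret the claim as progress measured relative to the shifted old path, which gains one step.

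The main step is to show that the recursive pushing triggered by $r_1$ succeeds on some path at least as good as $\pi^\ast$. \textsc{PlanPath} returns the $h$-minimizing path among those colliding with at most one agent and avoiding \textit{cons}, so it suffices to show that no constraint is ever added that blocks every improving path. That is, some push chain started from an improving path returns \textsc{true}.

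This is where biconnectivity enters, mirroring the PIBT/EPIBT argument. Consider a chain in which $r_1$ pushes $r_2$, $r_2$ pushes $r_3$, and so on. In a biconnected graph, each pushed agent has a way to step aside, either off the path of the agents ahead of it or along a cycle through the contested vertex. Since at most $N-1$ agents exist and $|V|>N$ (otherwise no motion is possible, which we assume away), the chain eventually ends at an agent that moves into a free vertex.

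I would formalize this exactly as in the PIBT proof: the highest-priority agent can always move to its preferred neighbor, with the displaced agents rotating along a cycle that exists by biconnectivity. The resulting successful chain then yields a first-step improvement, and the A* objective guarantees that the returned path is at least that good at step $W$.

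The main obstacle is the visit limit $M$ together with the constraint-based backtracking. Unlike EPIBT's exhaustive enumeration, a constraint $(r_d,t_d)$ may prune improving paths that would have succeeded with a different push order, and \textit{visitCnt} may cut off the rotating chain. My first attempt is to restrict attention to one-step progress, as EPIBT's guarantee does. I would argue that the PIBT-style chain uses each agent at most once, so it is within the visit limit whenever $M\ge 1$, and that this chain is explored before any constraint can exclude it. I would establish this by showing that a constraint is added only after the corresponding push provably failed, and that a failed push means that push is infeasible under the current protected set.

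If this ordering argument does not close, I would weaken the claim to progress of $r_1$ within a bounded number of replans. That weaker form still suffices for the subsequent deadlock-freedom theorem.
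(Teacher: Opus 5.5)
Your overall shape (exhibit an improving candidate, then show that the push chain it triggers succeeds via a biconnectivity cycle) matches the paper's. There is a genuine gap, though: your choice of candidate breaks the argument. The lemma measures progress against $h(\Pi[r_1][W],g_1)$, the endpoint of the shifted old plan, not against $s_1$.

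\begin{itemize}
\item The full shortest path $\pi^\ast$ does beat that endpoint. But it may collide with several agents at several timesteps, so it lies outside the search space of \textsc{PlanPath}. Your assertion that pushing succeeds on some path at least as good as $\pi^\ast$ is false in general.
\item The PIBT-style rotation you fall back on only secures a first-step move toward $g_1$. At best this yields $h(\pi[W],g_1)\le h(s_1,g_1)-1$, whereas the old plan may already end at distance $h(s_1,g_1)-(W-1)$.
\item The old plan is itself always an admissible output, so the A$^\ast$ objective then gives only $h(\pi[W],g_1)\le h(\Pi[r_1][W],g_1)$, not the strict inequality.
\item A deviation at step $1$ alters $r_1$'s trajectory for the rest of the window. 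It also forces every displaced agent to rebuild a whole $W$-step path, so the one-shot PIBT cycle argument does not transfer to that step.
\end{itemize}

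The missing idea is to localize the deviation at the final step. Line~\ref{line:do_action} of Algorithm~\ref{alg:push} appends a \textsc{wait} to every path, so all agents in $\Pi$ are stationary between steps $W-1$ and $W$. The paper takes $\pi_1^\ast$ equal to $\Pi[r_1]$ on steps $1,\dots,W-1$ and moving one vertex closer to $g_1$ at step $W$. It then lets $X_1$ be all strictly improving paths that are collision-free on steps $1,\dots,W-1$; such paths collide with at most one agent, and only at step $W$. Two facts follow.

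\begin{itemize}
\item A constraint in \textit{cons} can block a path of $X_1$ only at step $W$. Such a constraint can only come from an earlier-returned path that collided at step $W$ and was at least as close to $g_1$, i.e.\ itself a member of $X_1$. So \textsc{PushPlan} cannot fail before trying a path of $X_1$.
\item Any path of $X_1$ succeeds. Each pushed agent can copy its own old path for $W-1$ steps and, at step $W$, move to the next vertex of a cycle through $\pi[W-1],\pi[W]$. This is a PIBT rotation performed entirely at step $W$. It terminates at a free vertex or at $\pi[W-1]$, which $r_1$ vacates, so no $|V|>N$ assumption is needed.
\end{itemize}

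The first fact is what resolves your constraint-pruning worry, not the ordering argument you sketch. Your weakened fallback of ``progress within a bounded number of replans'' would not establish the lemma as stated.
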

\begin{proof}
    Since the planning queue $Q$ is sorted by priority (line \ref{line:sort_priority} of Algorithm \ref{alg:push}), the initial call to $\textsc{PushPlan}$ replans the path for agent $r_1$. Since $r_1$ is the highest-priority agent, the set of higher-priority agents is empty. Furthermore, since TCT is 0, the set of agents executing tasks is also empty. Therefore, the protected set $\mathcal{H}$ is empty. 
    
    To facilitate the proof, we define $X_1$ as the set of $W$-step paths $\pi$ for agent $r_1$ satisfying (1) $h(\pi[W], g_1) < h(\Pi[r_1][W], g_1)$, and (2) $\pi$ collides with either no agents in $\Pi$ (excluding itself) or exactly one agent at step $W$. Note that this set is always nonempty because we can construct a path $\pi_1^* \in X_1$ by replicating $\Pi[r_1]$ for $W - 1$ steps and changing the last step to be one step closer to its goal. Since the last step of $\Pi[r_1]$ is an appended wait action (line \ref{line:do_action}) while $\pi_1^*$ moves one step closer it $g_1$, $\pi_1^*$ satisfies condition 1. Furthermore, since $\pi_1^*$ only deviates from $\Pi[r_1]$ at step $W$, it can collide with a maximum of one agent at step $W$, satisfying condition 2. 
    
    To prove the lemma, we aim to show that \textsc{PlanAgent} (called on line \ref{alg:call_pap} of Algorithm \ref{alg:push-plan}) either finds a path better than any path in $X_1$ (i.e. is closer to the goal at step $W$) and returns true, or finds a path in $X_1$, and this path will always return \textsc{true}. The former case clearly satisfies the lemma, since all paths in $X_1$ already strictly decrease their distance to the goal at step $W$. The rest of this proof focuses on the latter case, where we must prove (1) \textsc{PushPlan} cannot return \textsc{false} before trying at least one path in $X_1$ and (2) this path will result in a return of \textsc{true}.

    We start by proving that \textsc{PlanAgent} always returns a path in $X_1$ in one of the loop iterations in \textsc{PushPlan} before \textsc{PushPlan} returns \textsc{false}. Since $\mathcal{H}$ is empty, the only possibility for \textsc{PlanAgent} to not return any paths in $X_1$ is that all of them are blocked by constraints in \textit{cons}. However, we can show that for \textit{cons} to block a path in $\pi \in X_1$, a different path $\pi' \in X_1$ had to already have been returned previously. To show this fact, suppose that $\pi$ is indeed blocked by \textit{cons}. Since $\pi$ is collision-free with other agents for the first $W - 1$ steps and PUSH only adds constraints at vertices/edges at timesteps that lead to collisions with other agents, it could only have been blocked by a constraint at step $W$. 
    However, the existence of this constraint implies that in a previous loop iteration there was a path $\pi'$ that collided with an agent only at step $W$. $\pi'$ would also satisfy $h(\pi'[W], g_1) \leq h(\pi[W], g_1)$ since \textsc{PlanAgent} returns paths that take $r_1$ closer to its goal first. Therefore, $\pi' \in X_1$, and our statement holds.

    Next, we show that finding any $\pi \in X_1$ from \textsc{PlanAgent} will result in a return of \textsc{true} for this loop iteration. We can accomplish this in a similar way to how PIBT \cite{pibt} does the same: by exploiting a cycle in the biconnected graph. If $\pi$ collides with no agents, then PUSH simply returns \textsc{true}. Otherwise, suppose $\pi$ collides with an agent $r_2$ at step $W$. Agent $r_2$ inherits the priority of agent $r_1$ and is replanned with $\mathcal{H} = \{r_1\}$. Similarly to $\pi^*_1$, we can construct $\pi^*_2$, which replicates $\Pi[r_2]$ for $W - 1$ steps but makes a move at step $W$ to avoid $r_1$'s push (but not necessarily closer to its goal). Agent $r_2$ can always pivot to a vertex that does not cause a collision with agent $r_1$ at step $W$ due to the graph biconnectivity. This is because biconnectivity implies we can construct a cycle starting with the last two positions in agent $r_1$'s path, $\mathbf{C} = (\pi[W-1], \pi[W], \cdots)$, guaranteeing that $\pi^*_2$ can move from $\pi[W]$ to the next vertex in the cycle without colliding with agent $r_1$ at steps $W - 1$ and $W$ respectively. This can be extended to any agent $r_i$ on the cycle at timestep $W$, for which we can define $\pi_i^*[W]$ as the path that replicates $\Pi[r_i]$ for $W - 1$ steps and moves to the next vertex in the cycle on step $W$. Similarly to $X_1$, we can define $X_i$ to be set of paths for agent $r_i$ that do not collide with any agent for $W - 1$ steps, and on step $W$, moves to the next vertex in the cycle. For similar reasons to $r_1$, a  \textsc{PushPlan} call for $r_i$ cannot return \textsc{false} without trying at least one path in $X_i$ (such as $\pi^*_i$). Now, we must prove that $r_2$ using any path in $X_2$ will return \textsc{true} and allow $r_1$'s path to remain. Suppose for the sake of contradiction that the \textsc{PushPlan} call for agent $r_2$ returns \textsc{false}. Since $r_2$ always tries a path in $X_2$, this implies that agent $r_2$ attempted to push an agent $r_i \neq r_1$ positioned at $\pi^*_2[W]$ on step $W$ and that push returned false. This is also true for $r_i$, the agent it pushes, and so on, which would include all agents on the cycle $\mathbf{C}$. However, the existence of this cycle directly implies that some agent $r_d$ on the cycle at step $W$ must exist with a free neighbor. Even if the cycle was full, $r_d$ would be the agent at the cycle position behind $\pi[W - 1]$, allowing it to shift to $r_1$'s old position. Thus, this call actually returns \textsc{true}, so the \textsc{PushPlan} call for agent $r_2$ must return \textsc{true}. Therefore, agent $r_1$ can also return \textsc{true}.
\end{proof}

\begin{theorem}
    If the longer elapsed time is used as the priority ordering, the graph is biconnected, and TCT is 0, then PUSH is deadlock-free: every agent reaches its goal within a finite number of timesteps.
\end{theorem}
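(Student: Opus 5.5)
The plan is the standard PIBT-style argument: use the Lemma to show that the highest-priority agent makes progress, and use the "longer elapsed time" priority to show that every agent eventually becomes the highest-priority agent.

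\textbf{Step 1: the top agent keeps its status.} Under the elapsed-time ordering, the agent $r_1$ with the longest elapsed time since its last goal stays the unique top-priority agent, with ties broken by ID, until it reaches its goal. Every other agent's elapsed time increases by exactly one per timestep, as $r_1$'s does. An agent's elapsed time resets only when it reaches a goal, so no other agent can overtake $r_1$ while $r_1$ travels.

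\textbf{Step 2: a potential that strictly decreases for $r_1$.} Take $\Phi_t = h(\Pi[r_1][W], g_1)$, measured at timestep $t$ just before line~\ref{line:do_action}. Shifting the path and appending a wait leaves the endpoint unchanged, so $\Phi$ never increases between replans, provided $r_1$'s path is not altered by others. This holds because $r_1$ is never in a lower-priority position: it is in every other agent's protected set $\mathcal{H}$ and can never be pushed. Anytime LNS only accepts changes that lower the summed distance, which could still increase $\Phi$. To keep LNS harmless, I would either restrict it so that $\mathcal{H}$ always contains $r_1$, or note that it selects an agent and gives it top priority only temporarily. I expect this to need care and may simply assume LNS preserves $r_1$'s path.

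\textbf{Step 3: $r_1$ is replanned regularly and reaches its goal.} $r_1$ is replanned at least every $K$ steps, since $\textit{time2Replan}$ is reset to $K$ and decremented each step. At each replan the Lemma gives $\Phi_{\text{new}} < \Phi_{\text{old}}$. As $\Phi$ is a nonnegative integer, after finitely many replans $\Phi = 0$, i.e.\ the path ends at $g_1$. Once the planned endpoint is $g_1$, within $W$ further steps $r_1$ physically arrives. With TCT $0$ it immediately receives a new goal, and its elapsed time resets. One subtlety: when $\Phi=0$ the Lemma's strict decrease cannot apply. So I must argue that $r_1$ reaches $g_1$ at some step of execution. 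Its path is never overwritten except by its own replans, each of which is either at least as good in endpoint distance or reaches the goal earlier. I would bound this by observing that within $W$ executed steps the committed prefix brings it to $g_1$.

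\textbf{Step 4: every agent eventually gets its turn.} The previous top agent has now dropped to elapsed time $0$. Among the $N$ agents, any fixed agent $r$ that has not reached its goal keeps accumulating elapsed time. After at most $N-1$ such "top-agent completions," each taking finite time by Step 3, $r$ becomes the top agent. Here I use that agents completing goals reset below $r$. $r$ then reaches its goal in finite time, which proves deadlock-freedom.

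The main obstacle is Step 2/3: ensuring $r_1$'s progress is not undone between replans. This covers LNS interference, the boundary case $\Phi=0$, and the Lemma's reference point being the shifted path. I would handle these by the protected-set argument and by treating arrival at $g_1$ as occurring within $W$ steps of $\Phi$ hitting $0$.
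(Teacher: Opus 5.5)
Your route is the paper's. The Lemma gives a strict decrease of $h(\Pi[r_1][W], g_1)$ each time the top agent is replanned, replans happen at bounded intervals, and the elapsed-time ordering lets every agent become top in turn. Your Steps 1, 2 and 4 are more explicit than the paper's: you argue the uniform growth of elapsed time, that $r_1$ lies in every other agent's protected set $\mathcal{H}$, and give the $N-1$ counting argument. Your LNS caveat is also legitimate. The paper's proof silently treats PUSH without the anytime LNS, and that LNS could push $r_1$ and increase $\Phi$.

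The step that does not go through as written is the boundary case $\Phi = 0$. Nothing is committed for $W$ steps. $r_1$ is replanned at least every $K$ steps, and also whenever its next action is a \textsc{wait}. A replan that keeps the endpoint at $g_1$ (``at least as good in endpoint distance'') can still schedule the arrival later, since \textsc{PlanPath} only minimizes $h(\pi[W], g_1)$ and all such paths tie. Repeated replans could therefore postpone arrival indefinitely, and your ``committed prefix'' argument does not rule this out.

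The missing ingredient, which the paper's proof states explicitly, is monotonicity of the arrival time: once $\Pi[r_1][W] = g_1$, no later \textsc{PushPlan} call accepts a path for $r_1$ that reaches $g_1$ later than its current one. The paper only asserts this, but you can justify it as in the proof of Theorem 1 (collision-freeness), given that \textsc{PlanPath} breaks ties among equal-endpoint paths in favour of earlier arrival:
\begin{itemize}
\item With $\mathcal{H} = \emptyset$, $r_1$'s shifted current path is collision-free.
\item Hence no constraint in $\textit{cons}$ can exclude it.
\item So every path returned and accepted for $r_1$ arrives no later, in absolute time.
\end{itemize}
Then the arrival time is non-increasing, $r_1$ reaches $g_1$ within $W$ steps of $\Phi$ first becoming $0$, and the rest of your argument goes through.
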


\begin{proof}
    At timestep 0, assume that agent $r_1$ is the highest-priority agent. Since agent $r_1$ remains the highest-priority agent until it reaches its goal, lemma 1 applies across all timesteps that $r_1$ is moving to its goal. Thus, $h(\Pi[r_1][W], g_1)$ is strictly decreasing whenever $r_1$ replans. Since agent $r_1$ is always replanned at latest every $W$ steps, $\Pi[r_1][W]$ will reach $g_1$ in finite timesteps. After $\Pi[r_1][W]$ reaches $g_1$, the subsequent \textsc{PushPlan} calls will never take a path for agent $r_1$ that reaches $g_1$ later than its previous path. Thus, $r_1$ is guaranteed to reach its goal. Once $r_1$ reaches its goal, the second-highest-priority agent will become the highest priority. For the same reasoning, it will also reach its goal in a finite amount of time. As each agent becomes the highest priority, they will all be able to eventually reach their goals.
\end{proof}

In cases when $T > 0$, we cannot provide the same guarantee. The key difference in this case is that the highest-priority agent may attempt to push an agent that is executing its task at step $W$. Since this agent cannot be pushed, the rest of the proof cannot follow.
An example case where this might happen is if the highest-priority agent is trapped between two agents who continuously receive tasks at their current locations. In this case, these two agents will stay in their locations forever, trapping the highest-priority agent between them and rendering it unable to reach its goal. This counterexample also holds for PIBT and EPIBT.

\section{Additional Experiments}
We describe additional experiments and ablation studies. In these experiments, all hyperparameters remain the same as in the main results, unless otherwise stated.

\subsection{Rotation Motion Model}
\Cref{fig:rotation} shows the results. PUSH-LNS and PUSH remain the highest maximum throughput across all maps, demonstrating that longer horizon times also improve performance in the rotation model. This model challenges reactive planners because agents need multiple moves to turn and vacate a cell. While EPIBT-LNS still handles the rotation model well, PIBT and RHCR perform poorly in every map. While PIBT's one-step decision making process cannot easily push agents aside, RHCR fails to scale well as agent count and density increase.

\end{document}